\newcommand{\argmax}{\operatornamewithlimits{argmax}}
\newtheorem{theorem}{Theorem}[section]
\newtheorem{lemma}{Lemma}[section]  
\newcommand{\reffig}[1]{Figure~\ref{#1}}
\newcommand{\refsec}[1]{Section~\ref{#1}}
\newcommand{\refthe}[1]{Theorem~\ref{#1}}
\newcommand{\reflem}[1]{Lemma~\ref{#1}}
\newcommand{\refeq} [1]{equation~(\ref{#1})}
\newcommand{\refeqwo}[1]{(\ref{#1})}
\newcommand{\bs}[1]{{\boldsymbol{#1}}}
\newcommand{\sym}{{{\psi}}}
\begin{document}
%
\title{Low Complexity Linear Programming Decoding of Nonbinary Linear Codes}

\author{Mayur Punekar and Mark F. Flanagan\\
Claude Shannon Institute, \\
University College Dublin, Belfield, Dublin 4, Ireland. \\
{\{mayur.punekar, mark.flanagan\}@ieee.org}
}



\maketitle

\begin{abstract}
Linear Programming (LP) decoding of Low-Density Parity-Check (LDPC) codes has attracted much attention in the research community in the past few years. The aim of LP decoding is to develop an algorithm which has error-correcting performance similar to that of the Sum-Product (SP) decoding algorithm, while at the same time it should be amenable to mathematical analysis. 
The LP decoding algorithm has been derived for both binary and nonbinary decoding frameworks.
However, the most important problem with LP decoding for both binary and nonbinary linear codes is that the complexity of standard LP solvers such as the simplex algorithm remain prohibitively large for codes of moderate to large block length. To address this problem, Vontobel \textit{et al.} proposed a low complexity LP decoding algorithm for binary linear codes which has complexity linear in the block length. In this paper, we extend the latter work and propose a low-complexity LP decoding algorithm for nonbinary linear codes. We use the LP formulation 
for the nonbinary codes 
as a basis and derive a pair of primal-dual LP formulations.
The dual LP is then used to develop the low-complexity LP decoding algorithm for nonbinary linear codes. 
The complexity of the proposed algorithm is linear in the block length and is limited mainly by the maximum check node degree. 
As a proof of concept, we also present a simulation result for a $[80,48]$ LDPC code defined over $\mathbb{Z}_4$ using quaternary phase-shift keying over the AWGN channel, and we show that the error-correcting performance of the proposed LP decoding algorithm is similar to that of the standard LP decoding using the simplex solver.
\end{abstract}
\section{INTRODUCTION}
Low-Density Parity-Check (LDPC) codes belong to the class of capacity achieving codes. They were introduced in the early 1960s by Gallager \cite{Ga_63} but attracted the attention of the research community only after they were rediscovered by MacKay \textit{et al.} in the late 1990s \cite{MaNe_96}. LDPC codes are generally decoded by the belief propagation algorithm (also known as the Sum-Product (SP) algorithm) which has time complexity linear in the block length. However, binary LDPC codes suffer from an \textit{error floor} effect in the high SNR region. Some progress has been made in the direction of finite length analysis of LDPC codes and concepts such as graph-cover pseudocodewords, trapping sets, stopping sets etc. were introduced and investigated to understand the behavior of the SP algorithm. Nevertheless, finite length analysis of LDPC codes under the SP algorithm is a difficult task and it is still difficult to predict error floor behavior for a particular code.

The main focus of research in the area of LDPC codes has been on \textit{binary} LDPC codes. However, it is desirable to use nonbinary LDPC codes in many applications where bandwidth efficient higher order (i.e. nonbinary) modulation schemes are used. Nonbinary LDPC codes are also considered for storage applications \cite{MaHa_09}. Nonbinary LDPC codes and the corresponding nonbinary SP algorithm were investigated by Davey and MacKay in \cite{DaMa_98} and since then many code construction methods and optimized nonbinary SP algorithms have been proposed. However, finite length analysis of nonbinary LDPC codes under the nonbinary SP algorithm is also difficult and very few attempts (e.g. \cite{AnKa_10}) have been made in this direction.

An alternative decoding algorithm for binary LDPC codes, known as Linear Programming (LP) decoding, was proposed by Feldman \textit{et al.} \cite{FeWa_05}. In LP decoding, the ML decoding problem is modeled as an Integer Programming (IP) problem which is then relaxed to obtain the corresponding LP problem. This LP problem is solved with the help of standard LP solvers such as simplex. Compared to SP decoding, LP decoding relies on the well-studied mathematical theory of LP. Hence, the LP decoding algorithm is better suited to mathematical analysis and it is possible to make statements about its complexity and convergence, as well as to place bounds on its error-correcting-performance etc. However, the worst-case time complexity of the simplex solver is known to be exponential in the number of variables, which limits the use of LP decoding algorithms to codes of small block length. To overcome the complexity problem,  in \cite{VoKo_06} Vontobel \textit{et al.} used techniques from LP and coding theory to derive a low-complexity LP decoding algorithm for approximate LP decoding. The complexity of this latter LP decoding algorithm is linear in the block length and similar to that of the SP algorithm. A similar algorithm for more general graphical models is proposed in \cite{GlJa_07}. An extension of low-complexity LP decoding algorithm of \cite{VoKo_06} was proposed and studied in \cite{Bu_09}.

In \cite{FlSk_09},
the LP decoding algorithm for binary linear codes was extended to the case of nonbinary linear codes.
The nonbinary LP decoding algorithm of \cite{FlSk_09} also relies on the simplex LP solver and hence its complexity is prohibitively large for moderate and large block length codes. In this paper we extend the work of \cite{VoKo_06} and propose a low-complexity LP decoding algorithm for nonbinary linear codes. We use the LP formulation of nonbinary linear codes proposed in \cite{FlSk_09} to develop an equivalent primal LP formulation. Then using the the techniques introduced in \cite{Vo_02, VoLo_03},
the corresponding dual LP is derived 
which in turn is used to develop an update equation for the low-complexity LP decoding algorithm. 
This paper follows the development of the low-complexity LP decoder for binary LDPC codes proposed in \cite{VoKo_06}. However, 
there are three main points in which it differs from the work in \cite{VoKo_06};
first, the derivation of the primal-dual LP formulations for nonbinary linear codes; second, the update equation for the low-complexity LP decoding of nonbinary codes; and third, the decision rule required to obtain an estimate of the symbols after the algorithm terminates.

The rest of the paper is structured as follows. We begin with some notation and background information in Section II. The primal LP is developed in Section III and the corresponding dual LP is given in Section IV. The notion of ``local function'' is given in Section V. Section VI presents the low-complexity LP decoding algorithm for nonbinary linear codes. Simulation results are presented and discussed in Section VII. Conclusions are given in Section VIII, along with future directions for this research.

\section{Notation and Background}
Let $\Re$ be a finite ring with $q$ elements where $0$ denotes the additive identity, and let $\Re^{-} = \Re \setminus \{0\}$. Let $\mathcal{C}$ be a linear code of length $n$ over the ring $\Re$, defined by
\begin{eqnarray}
\mathcal{C} = \{ \bs{c} \in \Re^{n} : \bs{c} \mathcal{H}^{T} = \bs{0} \}
\end{eqnarray}
where $\mathcal{H}$ is a $m \times n$ parity-check matrix with entries from $\Re$. The rate of code $\mathcal{C}$ is given by $R(\mathcal{C}) = \log_q (|\mathcal{C}|) / n$. Hence, the code $\mathcal{C}$ can be referred as an $[n, \log_q(|\mathcal{C}|)]$ linear code over $\Re$. 

The set $\mathcal{J} = \{1,\ldots,m\}$ denotes row indices and the set $\mathcal{I} = \{1,\ldots,n\}$ denotes column indices of $\mathcal{H}$. We use $\mathcal{H}_j$ for the $j$-th row of $\mathcal{H}$ and $\mathcal{H}^i$ for the $i$-th column of $\mathcal{H}$. supp$(\bs{c})$ denotes the support of the vector $\bs{c}$. For each $j \in \mathcal{J}$, let $\mathcal{I}_j = \mbox{supp}(\mathcal{H}_j)$ and for each $i \in \mathcal{I}$, let $\mathcal{J}_i = \mbox{supp}(\mathcal{H}^i)$. Also let $d_j = |\mathcal{I}_j|$ and $d = \max_{j \in \mathcal{J}}\{d_j\}$. We define set $\mathcal{E} = \{(i,j) \in \mathcal{I} \times \mathcal{J} \; : \; j \in \mathcal{J}, i \in \mathcal{I}_j\} = \{(i,j) \in \mathcal{I} \times \mathcal{J} \; : \; i \in \mathcal{I}, j \in \mathcal{J}_i\}$. Moreover for each $j \in \mathcal{J}$ we define the local Single Parity Check (SPC) code 
\begin{equation*}
\mathcal{C}_j = \left \{(b_i)_{i \in \mathcal{I}_j} : \sum_{i \in \mathcal{I}_j} b_i \cdot \mathcal{H}_{j,i} = 0 \right\} 
\end{equation*}
For each $i \in \mathcal{I}$, we denote by $\mathcal{A}_i \subseteq \Re^{|\{0\} \cup \mathcal{J}_i|}$ the repetition code of the appropriate length and indexing. In addition, we use the following notation introduced in 
\cite{VoKo_06}: for a statement $A$ we have $\llbracket$A$\rrbracket = 0$ if statement $A$ is true and $\llbracket A \rrbracket = +\infty$ otherwise. Here $\llbracket A \rrbracket = -\log[A]$ and $[A]$ is Iverson's convection i.e. we have $[A] = 1$ if $A$ is true and $[A] = 0$ otherwise. Please note that where $A$ indicates the value of a variable, Iverson's convention can also be interpreted as the Kronecker delta function.
%
%
%
%
%
%
We define the following mapping as in \cite{FlSk_09},
\begin{equation*}
\xi : \Re \rightarrow \{0,1\} ^{q-1} \subset \mathbb{R}^{q-1}
\end{equation*}
by
\begin{equation*}
\xi(\alpha) = \bs{x} = (x^{(\rho)})_{\rho \in \Re^{-}}
\end{equation*}
such that, for each $\rho \in \Re^{-}$
\begin{eqnarray*}
x^{(\rho)} = \left\{ 
\begin{array}{l l}
  1, & \; \text{if} \; \rho = \alpha \\
  0, & \; \text{otherwise}\\
\end{array} \right.
\end{eqnarray*}
Building on this we define
\[
\Xi : \underset{t \in \mathbb{Z}^{+}}{\cup} \Re^{t} \rightarrow \underset{t \in \mathbb{Z}^{+}}{\cup} \{0,1\}^{(q-1)t} \subset \underset{t \in \mathbb{Z}^{+}}{\cup} \mathbb{R}^{(q-1)t} \; ,
\]
according to
\[
\Xi (\bs{c}) = (\xi(c_1),\dots,\xi(c_t))\nonumber, \quad \forall \bs{c} \in \Re^{t}, t \in \mathbb{Z}^{+} \; .
\]

For $\kappa \in \mathbb{R}, \kappa > 0$, we define the function \[\sym (x) = e^{\kappa x},\] and its inverse \[\sym^{-1} (x) = \frac{1}{\kappa} \log (x).\] 

For vectors $\bs{f} \in \mathbb{R}^{(q-1)n}$ we use the notation
\begin{equation*}
\bs{f} = (\bs{f}_1 \;| \; \bs{f}_2 \; | \; \cdots \; | \; \bs{f}_n) \quad \text{where} \quad
%
\forall i \in \mathcal{I}, \bs{f}_i = (f_{i}^{(\alpha)})_{\alpha \in \Re^{-}}
\end{equation*}

We also define the inverse of $\Xi$ as
\begin{eqnarray*}
\Xi^{-1}(\bs{f}) = (\xi^{-1}(\bs{f}_1), \xi^{-1}(\bs{f}_2), \cdots, \xi^{-1}(\bs{f}_n)).
\end{eqnarray*}
Note that the inverse of $\Xi$ is well defined for any $\bs{f} \in \mathbb{R}^{(q-1)n}$ where each component $\bs{f}_i$, $i  \in \mathcal{I}$, has entries from $\{ 0,1 \}$ with sum at most $1$. 
We assume transmission over a $q$-ary input memoryless channel and also assume a corrupted codeword $\bs{y} =(y_1, y_2, \cdots, y_n)\in \Sigma^{n}$ has been received. Here, the channel output symbols are denoted by $\Sigma$. Based on this, we define a function $\bs{\lambda} : \Sigma \rightarrow (\mathbb{R} \cup \{\pm\infty\})^{q-1}$ by
\[
\bs{\lambda} = (\lambda^{(\alpha)})_{\alpha \in \Re^{-}} 
\]
where, for each $y \in \Sigma$, $\alpha \in \Re^{-}$,
\[
\lambda^{(\alpha)}(y) = \log \left(\frac{p(y|0)}{p(y|\alpha)}\right) \; .
\]

Here $p(y|c)$ denotes the channel output probability (density) conditioned on the channel input. Based on this, we also define 
\begin{equation*}
\bs{\Lambda}(\bs{y}) = (\bs{\lambda}(y_1)\;|\;\bs{\lambda}(y_2)\;| \cdots |\;\bs{\lambda}(y_n)) \; . 
\end{equation*}
We will use Forney-style factor graphs (FFGs), also known as Normal graphs \cite{Fo_01} to represent the linear programs introduced in this paper. An FFG is a diagram that represents the factorization of a function of several variables. For more information on FFGs the reader is referred to \cite{Fo_01,Vo_02,LO_04}.

\section{The Primal Linear Program}
In \cite{FlSk_09} the authors presented the following linear program to decode nonbinary linear codes:

\medskip
\textbf{NBLPD} (Polytope $\mathcal{Q}_f$):
\begin{align*}
\text{min.} & \quad \bs{\Lambda}(\bs{y})\bs{f}^{T} & &\\  
\mbox{Subj. to }& &\\
f_{i}^{(\alpha)} &= \sum_{\underset{b_i = \alpha}{\bs{b} \in \mathcal{C}_j}} w_{j,\bs{b}} & \forall &j \in \mathcal{J}, \; \forall i \in \mathcal{I}_j, \; \forall \alpha \in \Re^{-}\\
w_{j,\bs{b}} &\ge 0 &\forall &j \in \mathcal{J}, \forall \bs{b} \in \mathcal{C}_j,\\
\sum_{\bs{b} \in \mathcal{C}_j} w_{j,\bs{b}} &= 1 &\forall &j \in \mathcal{J}.
\end{align*}
We denote the polytope represented by the constraints of NBLPD as $\mathcal{Q}_{f}$. Two alternative polytope representations are also given in \cite{FlSk_09}, which are both equivalent to NBLPD. It also possible to reformulate the constraints of NBLPD with additional auxiliary variables. 
However, to develop a low-complexity LP decoding algorithm for NBLPD, we use the approach of \cite{VoKo_06} and reformulate NBLPD so that the new LP formulation can be directly represented by an FFG:

\medskip
\textbf{PNBLPD} (Polytope $\mathcal{Q}_p$):
\begin{align*}
\text{min.}& \quad\bs{\Lambda}(\bs{y}) \bs{f}^T & &\\
\mbox{Subj. to} & & & \nonumber \\
\bs{{f}}_i &= \bs{u}_{i,0} &(&i \in \mathcal{I}),  \nonumber \\
\bs{u}_{i,j} &= \bs{v}_{j,i} &(&(i,j) \in \mathcal{E}), \nonumber\\
\sum_{\bs{a} \in \mathcal{A}_i} \gamma_{i,\bs{a}} \; \Xi(\bs{a}) &= \bs{u}_i &(&i \in \mathcal{I}), \nonumber \\
\sum_{\bs{b} \in \mathcal{C}_j} \beta_{j,\bs{b}} \; \Xi(\bs{b}) &= \bs{v}_j &(&j \in \mathcal{J}), \nonumber \\
\gamma_{i,\bs{a}} &\geq 0 &(&i \in \mathcal{I}, \bs{a} \in \mathcal{A}_i), \nonumber \\
\beta_{j,\bs{b}} &\geq 0 &(&j \in \mathcal{J}, \bs{b} \in \mathcal{C}_j), \nonumber\\
\sum_{\bs{a} \in \mathcal{A}_i} \gamma_{i,\bs{a}} &= 1 &(&i \in \mathcal{I}), \nonumber \\
\sum_{\bs{b} \in \mathcal{C}_j} \beta_{j,\bs{b}} &= 1 &(&j \in \mathcal{J}). \nonumber
\end{align*}
Here $\bs{u}_{i,j} = (u_{i,j}^{(\alpha)})_{\alpha \in \Re^{-}}$ and $\bs{v}_{j,i} = (v_{j,i}^{(\alpha)})_{\alpha \in \Re^{-}}$ for all $i \in \mathcal{I}$, $j \in \mathcal{J}_i \cup \{ 0 \}$; also for $i \in \mathcal{I}$, $\bs{u}_i = (\bs{u}_{i,j})_{j \in \mathcal{J}_i \cup \{ 0 \}}$ and for $j \in \mathcal{J}$, $\bs{v}_j = (\bs{v}_{j,i})_{i \in \mathcal{I}_j}$. 
We denote the polytope represented by the constraints of PNBLPD by $\mathcal{Q}_{p}$. It is important to note that along with the convex hull of the single parity-check code, PNBLPD also explicitly models the convex hull of the repetition code. 
The constraints of NBLPD and PNBLPD appear to be quite different due to the different notations. However, the projection of each polytope onto the variables denoted by $\bs{f}$ is the same in both cases, and therefore the LPs are equivalent from the point of view of decoding. The proof of their equivalence is given in \refthe{the:th1}.
%
%
%
%
\begin{theorem} \label{the:th1}
Polytopes $\mathcal{Q}_{f}$ and $\mathcal{Q}_{p}$ are equivalent from an LP decoding perspective, i.e. for every $(\bs{f}, \bs{\gamma}, \bs{\beta}) \in \mathcal{Q}_{p}$ there exists $\bs{w}$ such that $(\bs{f}, \bs{w}) \in \mathcal{Q}_{f}$ and conversely, for every $(\bs{f}, \bs{w}) \in \mathcal{Q}_{f}$ there exist $\bs{\gamma}, \bs{\beta}$ such that $(\bs{f}, \bs{\gamma}, \bs{\beta}) \in \mathcal{Q}_{p}$.
\end{theorem}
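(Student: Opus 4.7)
The plan is to prove equivalence by exhibiting explicit transformations in each direction and then checking that all constraints of the target polytope are satisfied. Since both polytopes use the same $\bs{f}$ variables, and since the PNBLPD introduces $\bs{u}, \bs{v}, \bs{\gamma}, \bs{\beta}$ only as auxiliary variables that together model the convex hulls of the local repetition and SPC codes, the essential content of the theorem is that projecting $\mathcal{Q}_p$ onto $\bs{f}$ yields precisely $\mathcal{Q}_f$.

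For the direction $\mathcal{Q}_f \Rightarrow \mathcal{Q}_p$, starting from $(\bs{f}, \bs{w}) \in \mathcal{Q}_f$, I would set $\beta_{j,\bs{b}} := w_{j,\bs{b}}$ directly. The non-negativity and sum-to-one conditions on $\bs{\beta}$ follow at once, and defining $\bs{v}_j := \sum_{\bs{b} \in \mathcal{C}_j} \beta_{j,\bs{b}}\, \Xi(\bs{b})$ gives by direct component inspection $v_{j,i}^{(\alpha)} = \sum_{\bs{b}\in \mathcal{C}_j,\, b_i=\alpha} w_{j,\bs{b}} = f_i^{(\alpha)}$. I then define $\bs{u}_{i,j} := \bs{v}_{j,i}$ for $j \in \mathcal{J}_i$ and $\bs{u}_{i,0} := \bs{f}_i$, so that every block of $\bs{u}_i$ equals $\bs{f}_i$. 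The only remaining task is to exhibit a valid $\bs{\gamma}_i$. Noting that $\mathcal{A}_i$ is indexed by $\Re$ via the repeated symbol, I set $\gamma_{i,\alpha} := f_i^{(\alpha)}$ for $\alpha \in \Re^{-}$ and $\gamma_{i,0} := 1 - \sum_{\alpha \in \Re^{-}} f_i^{(\alpha)}$, which is non-negative because for any fixed $j \in \mathcal{J}_i$ the sum $\sum_{\alpha \in \Re^{-}} f_i^{(\alpha)} = \sum_{\bs{b} \in \mathcal{C}_j,\, b_i \neq 0} w_{j,\bs{b}} \leq 1$.

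For the reverse direction $\mathcal{Q}_p \Rightarrow \mathcal{Q}_f$, I would simply set $w_{j,\bs{b}} := \beta_{j,\bs{b}}$. The sum-to-one and non-negativity constraints are inherited directly. To verify the coupling constraint $f_i^{(\alpha)} = \sum_{\bs{b}\in \mathcal{C}_j,\, b_i=\alpha} w_{j,\bs{b}}$ I would use the key observation that for any $\bs{a} \in \mathcal{A}_i$, the vector $\Xi(\bs{a})$ consists of identical blocks $\xi(\alpha)$ (since $\bs{a}$ is a repetition word). Consequently the equation $\sum_{\bs{a}\in \mathcal{A}_i} \gamma_{i,\bs{a}}\,\Xi(\bs{a}) = \bs{u}_i$ forces every block $\bs{u}_{i,j}$ with $j \in \mathcal{J}_i \cup \{0\}$ to be equal; combined with $\bs{u}_{i,0} = \bs{f}_i$ this gives $\bs{u}_{i,j} = \bs{f}_i$ for all $j \in \mathcal{J}_i$. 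Then $\bs{v}_{j,i} = \bs{u}_{i,j} = \bs{f}_i$, and expanding $\bs{v}_j = \sum_{\bs{b}} \beta_{j,\bs{b}} \Xi(\bs{b})$ componentwise yields the required identity.

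The step most requiring care is the role of the repetition-code polytope: it is precisely what enforces consistency of $\bs{f}_i$ across all neighboring check nodes, which in NBLPD is built into the definition of $f_i^{(\alpha)}$ but in PNBLPD is delegated to the $\bs{\gamma}$ constraints. I do not anticipate a serious technical obstacle; the main point is to be careful about indexing (particularly the identification of $\mathcal{A}_i$ with $\Re$ via repetition, and the block structure of $\Xi$ when applied to $\bs{a}$ or $\bs{b}$) so that the componentwise calculations are transparent.
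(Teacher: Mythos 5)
Your proof is correct and follows essentially the same route as the paper: the direction from $\mathcal{Q}_p$ to $\mathcal{Q}_f$ (setting $w_{j,\bs{b}}=\beta_{j,\bs{b}}$ and using the repetition-code constraint to force every block of $\bs{u}_i$ to equal $\bs{f}_i$, so that expanding $\bs{v}_j$ componentwise gives the NBLPD coupling constraint) is exactly the paper's argument. The converse direction, which the paper dismisses as ``similar'' with details omitted, you actually carry out, and your explicit choice $\gamma_{i,\alpha}=f_i^{(\alpha)}$ for $\alpha\in\Re^{-}$ and $\gamma_{i,0}=1-\sum_{\alpha\in\Re^{-}}f_i^{(\alpha)}$, together with the non-negativity check via $\sum_{\bs{b}\in\mathcal{C}_j}w_{j,\bs{b}}=1$, is correct and supplies the one genuinely nontrivial detail the paper leaves out.
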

\begin{proof}
Suppose we have $(\bs{f}, \bs{\gamma}, \bs{\beta}) \in \mathcal{Q}_{p}$ and we define 
\begin{equation}
w_{j,\bs{b}} = \beta_{j,\bs{b}}, \quad \forall \bs{b} \in \mathcal{C}_j, \forall j \in \mathcal{J} \; .
\label{w_beta_eq} 
\end{equation}
%
%
%
%
The final two constraints of NBLPD are obviously fulfilled.
%
%
%
%
From PNBLPD, the following holds for $(\bs{f}, \bs{\gamma}, \bs{\beta}) \in \mathcal{Q}_{p}$:
\begin{align}
\bs{v}_j &= \sum_{\bs{b} \in \mathcal{C}_j} \beta_{j,\bs{b}} \; \Xi(\bs{b}), &\forall &j \in \mathcal{J}, \nonumber \\
\Rightarrow \bs{v}_{j,i} &= \sum_{\bs{b} \in \mathcal{C}_j} \beta_{j,\bs{b}} \; \xi(\bs{b}_i), &\forall &i \in \mathcal{J}_i, \forall j \in \mathcal{J},\nonumber 
\end{align}
This yields
\begin{align}
{v}_{j,i}^{(\alpha)} &= u_{i,j}^{(\alpha)} = \sum_{\underset{\bs{b}_{i} = \alpha}{\bs{b} \in \mathcal{C}_j}} \beta_{j,\bs{b}}, &\forall &\alpha \in \Re^{-}, \forall i \in \mathcal{I}_j, \forall j \in \mathcal{J}. \label{v_j_i}
\end{align}

From the third constraint of PNBLPD, and noting that $\mathcal{A}_i$ is a repetition code for each $i \in \mathcal{I}$, we have 
$u_{i,0}^{(\alpha)} = u_{i,j}^{(\alpha)}, \forall \alpha \in \Re^{-}, j \in \mathcal{J}_i$. With this and equation \eqref{v_j_i} we obtain the following,
\begin{align}
\Rightarrow u_{i,j}^{(\alpha)} &= u_{i,0}^{(\alpha)} = \sum_{\bs{b} \in \mathcal{C}_j, \bs{b}_i = \alpha} \beta_{j,\bs{b}} \nonumber \\
\Rightarrow f_{i}^{(\alpha)} &= \sum_{\underset{\bs{b}_{i} = \alpha}{\bs{b} \in \mathcal{C}_j}} w_{j,\bs{b}}, \quad \forall \alpha \in \Re^{-}, \forall i \in \mathcal{I}_j, \forall j \in \mathcal{J}. \label{point_3}
\end{align}
This proves the first constraint of NBLPD and hence $(\bs{f}, \bs{w}) \in \mathcal{Q}_{f}$.

The converse part of the theorem statement can be proved in a similar manner; the details are omitted. Since for a given vector $\bs{f}$, the objective functions in both formulations always have the same value, the decoding performance of NBLPD and PNBLPD are identical.
\end{proof}

\begin{figure*}
\begin{minipage}[b]{0.45\linewidth}
\centering
\includegraphics[width=1.0\columnwidth, keepaspectratio]{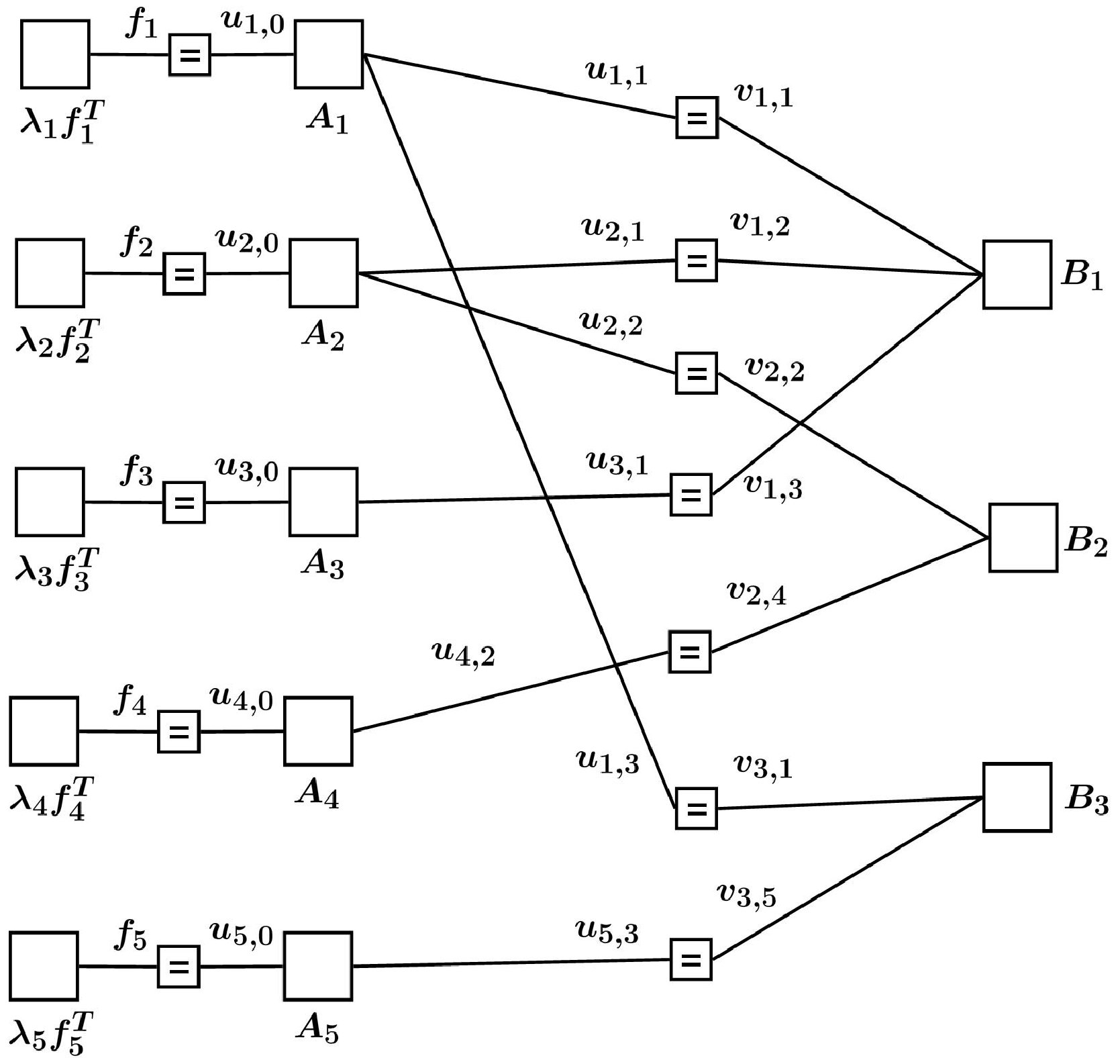}
\caption{FFG which represents the augmented cost function of \refeq{eq:primal} for the example $(5,2)$ binary code.} 
\label{fig:primal_LP_FFG}
\end{minipage}
\hspace{1cm}
\begin{minipage}[b]{0.45\linewidth}
\centering 
\includegraphics[width=1.0\columnwidth, keepaspectratio]{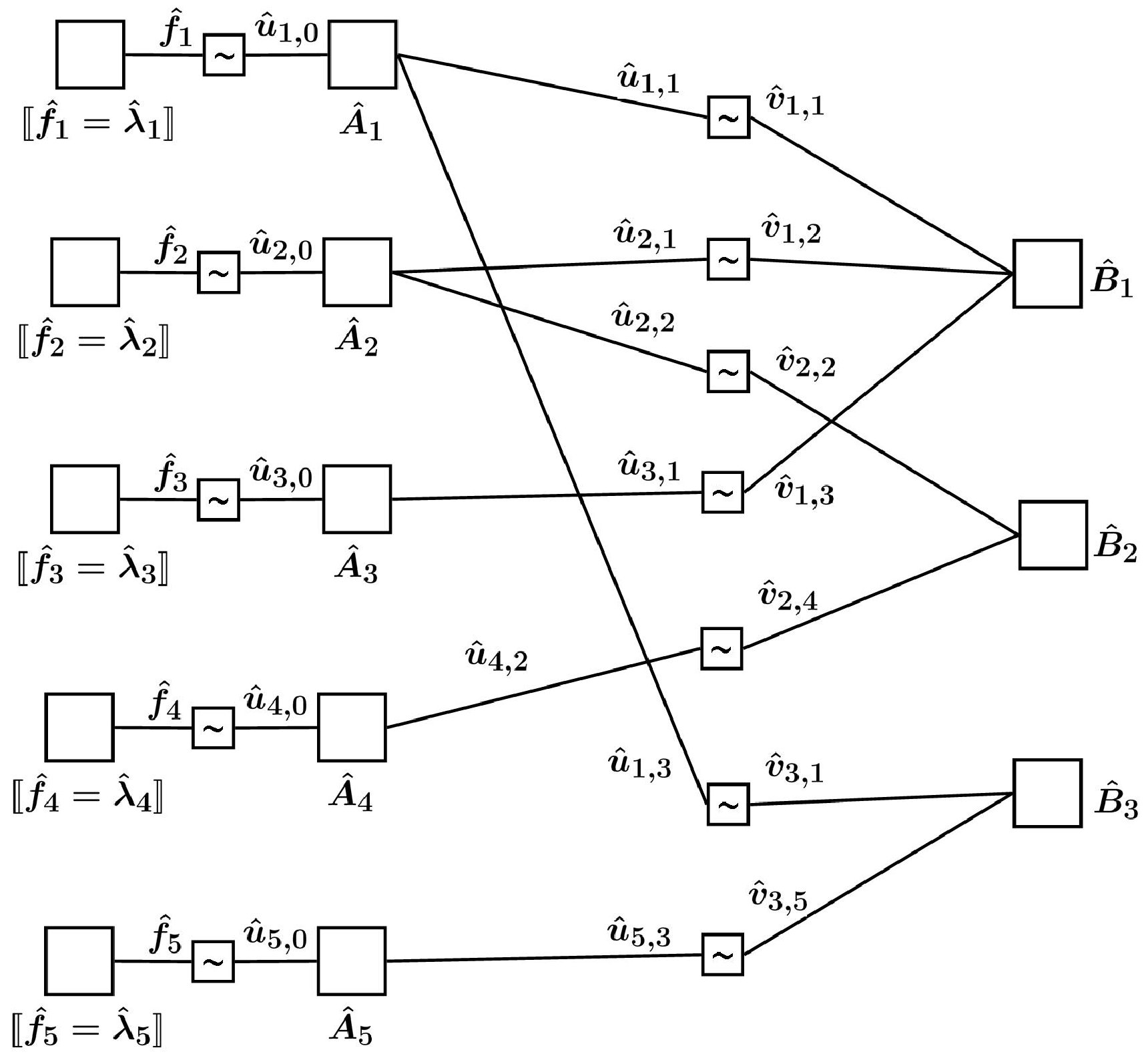}
\caption{FFG which represents the augmented cost function of \refeq{eq:dual} for the example $(5,2)$ binary code.} 
\label{fig:dual_LP_FFG}
\end{minipage}
\end{figure*}
\medskip

Before deriving the dual linear program, we reformulate the PNBLPD so that this LP can be represented by an FFG. For this purpose, constraints of the PNBLPD are expressed as additive cost terms (also known as \textit{penalty terms}). The rule for assigning cost to a configuration of variables is: if a given configuration satisfies the LP constraints then cost $0$ is assigned to this configuration, otherwise $+\infty$ is assigned. The PNBLPD is then equivalent to the unconstrained minimization of the following augmented cost function,
\begin{align}
\sum_{i \in \mathcal{I}} \bs{\lambda}_i \bs{f}_{i}^{T} &+ \sum_{i \in \mathcal{I}} \llbracket \bs{f}_i = \bs{u}_{i,0} \rrbracket + \sum_{(i,j) \in \mathcal{E}} \llbracket \bs{u}_{i,j} = \bs{v}_{j,i} \rrbracket \nonumber \\
&+ \sum_{i \in \mathcal{I}} A_i(\bs{u}_i) + \sum_{j \in \mathcal{J}} B_j(\bs{v}_j) \label{eq:primal}
\end{align}
where $\forall i \in \mathcal{I}$ and $\forall j \in \mathcal{J}$ we use
%
%
%
%
\begin{align*}
A_i(\bs{u}_i) &\triangleq \left \llbracket \sum_{\bs{a} \in \mathcal{A}_i} \gamma_{i, \bs{a}} \; \Xi(\bs{a}) = \bs{u}_i \right \rrbracket + \sum_{\bs{a} \in \mathcal{A}_i} \llbracket \gamma_{i, \bs{a}} \ge 0 \rrbracket \\
&+ \left \llbracket \sum_{\bs{a} \in \mathcal{A}_i} \gamma_{i, \bs{a}} = 1 \right \rrbracket , \\
B_j(\bs{v}_j) &\triangleq \left \llbracket \sum_{\bs{b} \in \mathcal{C}_j} \beta_{j, \bs{b}} \; \Xi(\bs{b}) = \bs{v}_j \right \rrbracket + \sum_{\bs{b} \in \mathcal{C}_j} \llbracket \beta_{j, \bs{b}} \ge 0 \rrbracket \\
&+ \left \llbracket \sum_{\bs{b} \in \mathcal{C}_j} \beta_{j, \bs{b}} = 1 \right \rrbracket.
\end{align*}
For ease of illustration we consider a $(5,2)$ binary code with parity-check matrix 
\begin{eqnarray*}
\mathcal{H} = \left[ 
\begin{array}{l l l l l}
1 & 1 & 1 & 0 & 0 \\
0 & 1 & 0 & 1 & 0 \\
1 & 0 & 0 & 0 & 1 \\
\end{array}
\right] \; .
\end{eqnarray*}
The augmented cost function for this code is represented by the FFG of \reffig{fig:primal_LP_FFG}.

\section{Dual Linear Program}
The dual linear program of PNBLPD can be derived from the augmented cost function of \refeq{eq:primal}. 
First we derive the dual of $A_i(\bs{u}_i) \text{ and } B_j(\bs{v}_j)$. For simplicity of exposition, we assume $\mathcal{A}_i = \{\bs{p}, \bs{q}\} = \{(p_0, p_1, p_2), (q_0, q_1, q_2)\}$. The (primal) FFG of $A_i(\bs{u}_i)$ is shown in \reffig{fig:primal_FFG} and its dual is shown in \reffig{fig:dual_FFG}. The dual FFG is derived with the help of techniques introduced in \cite{Vo_02} and \cite{VoLo_03}. 
\begin{figure*}
\begin{minipage}[b]{0.45\linewidth}
\centering
\includegraphics[width=1.0\columnwidth, keepaspectratio]{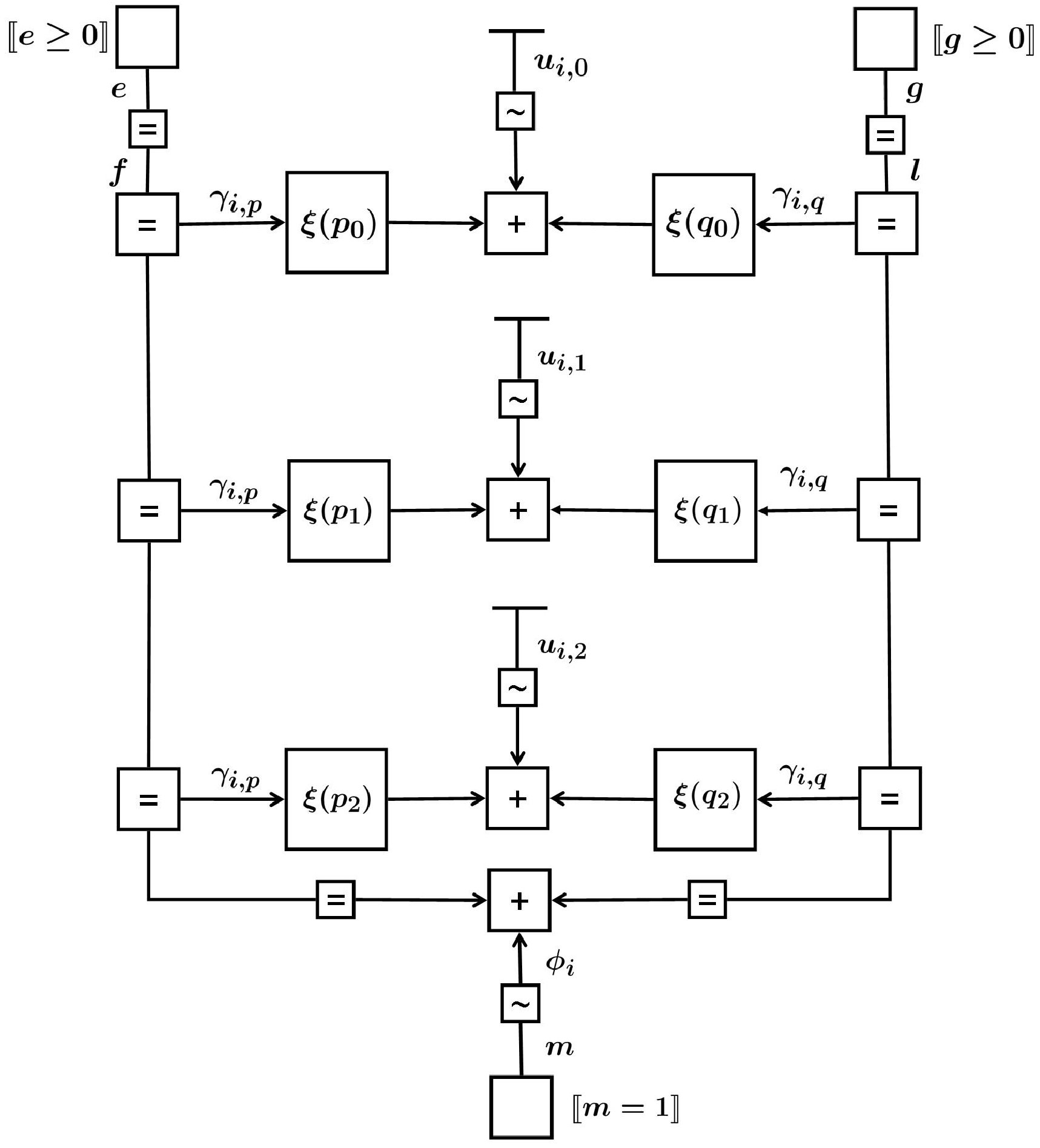}
\caption{FFG for the function $A_i(\bs{u}_i)$. This forms a subgraph of the overall FFG of \reffig{fig:primal_LP_FFG}.} 
\label{fig:primal_FFG}
\end{minipage}
\hspace{1cm}
\begin{minipage}[b]{0.45\linewidth}
\centering 
\includegraphics[width=1.0\columnwidth, keepaspectratio]{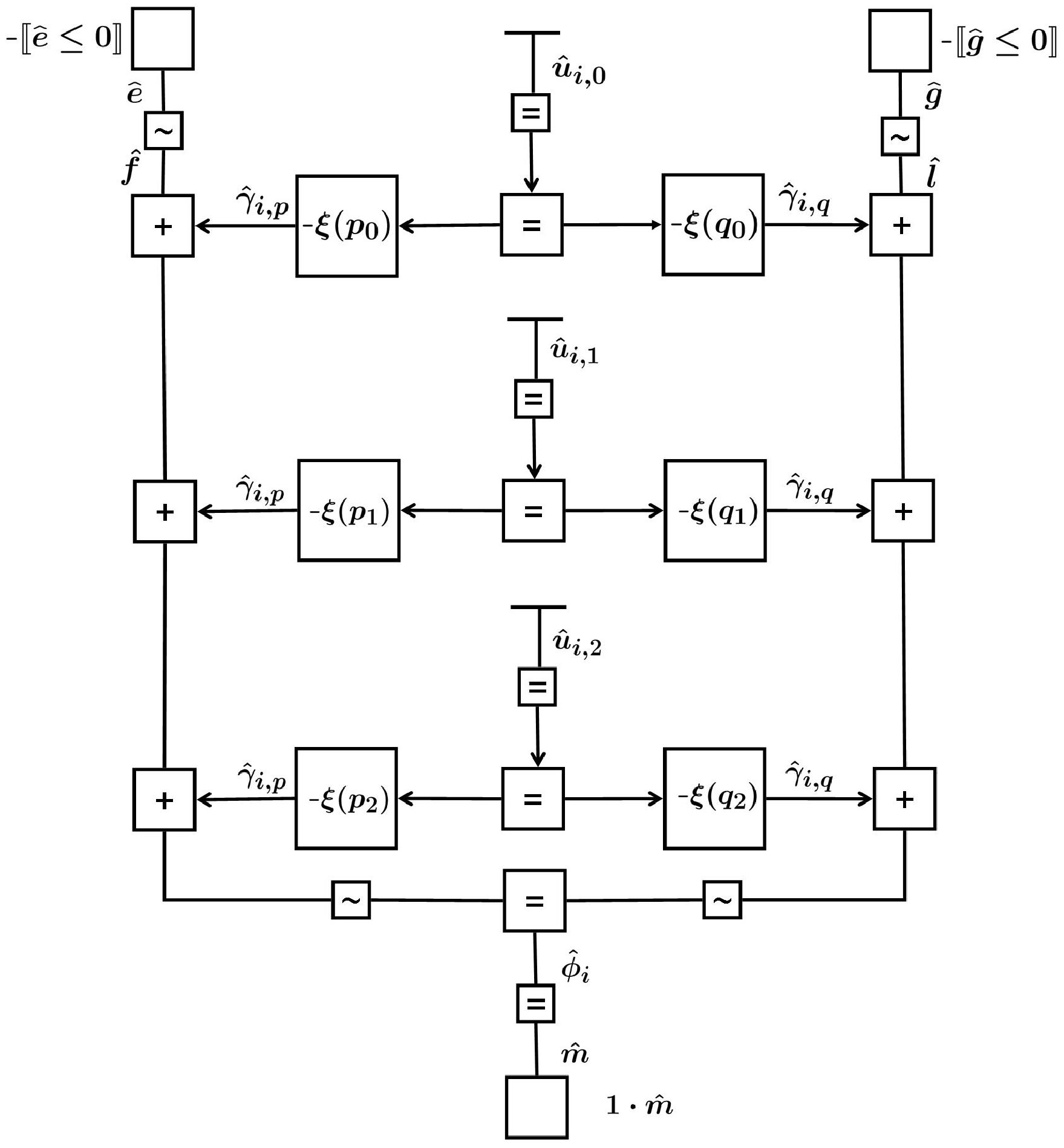}
\caption{FFG for the function $\hat{A}_{i}(\hat{\bs{u}}_{i})$. This FFG is dual to that of \reffig{fig:primal_FFG}. Here, for any variable $x$, $\hat{x}$ denotes the dual variable.} 
\label{fig:dual_FFG}
\end{minipage}
\end{figure*}
The dual function $\hat{A}_{i}(\hat{\bs{u}}_{i})$ is derived from the FFG of \reffig{fig:dual_FFG} as follows,
\begin{equation}
\hat{A}_{i}(\hat{\bs{u}}_{i}) = \hat{m} - \llbracket \hat{e} \leq  0 \rrbracket - \llbracket \hat{g} \leq 0 \rrbracket \label{eq:dual_1}
\end{equation}
where $\hat{m} = \hat{\phi}_{i}$ and
%
%
%
%
\begin{align} 
& \hat{e} = -\hat{f} = -\hat{\phi}_{i} + \langle \; -\hat{\bs{u}}_{i}, \; \Xi(\bs{p}) \; \rangle \nonumber \\
\Rightarrow & -\left\llbracket \hat{e} \geq 0 \right\rrbracket = -\left\llbracket \hat{\phi}_{i} \le \langle \; -\hat{\bs{u}}_{i}, \; \Xi(\bs{p}) \; \rangle \right\rrbracket \label{eq:dual_2}
\end{align}
Similarly
\begin{eqnarray}
-\llbracket \hat{g} \geq 0 \rrbracket = -\left \llbracket \hat{\phi}_{i} \le \langle \; -\hat{\bs{u}}_{i}, \; \Xi(\bs{q}) \; \rangle \right\rrbracket \label{eq:dual_3}
\end{eqnarray}
From \refeq{eq:dual_1}, \refeqwo{eq:dual_2}, \refeqwo{eq:dual_3}
\begin{align*}
& \hat{A}_{i}(\hat{\bs{u}}_{i}) = \hat{\phi}_{i} - \llbracket \hat{\phi}_{i} \le \langle \; -\hat{\bs{u}}_{i}, \; \Xi(\bs{p}) \; \rangle \rrbracket - \llbracket \hat{\phi}_{i} \le \langle \; -\hat{\bs{u}}_{i}, \; \Xi(\bs{q}) \; \rangle \rrbracket  \nonumber \\
& \Rightarrow \hat{A}_{i}(\hat{\bs{u}}_{i}) = \hat{\phi}_{i} - \left \llbracket  \hat{\phi}_{i} \le \min_{\bs{a} \in \mathcal{A}_i} \langle \; -\hat{\bs{u}}_{i}, \; \Xi(\bs{a}) \; \rangle \right\rrbracket 
\end{align*}
%
%
%

The same procedure can be used to derive the dual of $B_j(\bs{v}_j)$ as
\begin{eqnarray*}
\hat{B}_{j}(\hat{\bs{v}}_{j}) = \hat{\theta}_{j} - \left \llbracket  \hat{\theta}_{j} \le \min_{\bs{b} \in \mathcal{C}_j} \langle \; -\hat{\bs{v}}_{j}, \; \Xi(\bs{b}) \; \rangle \right\rrbracket \; .
\end{eqnarray*}
We use $\hat{A}_{i}(\hat{\bs{u}}_{i})$ and $\hat{B}_{j}(\hat{\bs{v}}_{j})$ to derive the dual of the LP represented by \refeq{eq:primal}, which is in the form of the maximization of the following augmented cost function,
\begin{align}
\sum_{i \in \mathcal{I}} &\hat{A}_{i}(\hat{\bs{u}}_{i}) + \sum_{j \in \mathcal{J}} \hat{B}_{j}(\hat{\bs{v}}_{j}) - \sum_{i \in \mathcal{I}} \llbracket \hat{\bs{f}}_{i} = -\hat{\bs{u}}_{i,0} \rrbracket \nonumber \\
&- \sum_{(i,j) \in \mathcal{E}} \llbracket \hat{\bs{u}}_{i,j} = -\hat{\bs{v}}_{j,i} \rrbracket - \sum_{i \in \mathcal{I}} \left\llbracket \hat{\bs{f}}_{i} = - \bs{\lambda}_{i}\right\rrbracket \label{eq:dual} 
\end{align}
The augmented cost function of \refeq{eq:dual} for the $(5,2)$ binary code is represented by the FFG of \reffig{fig:dual_LP_FFG}. 

The dual of PNBLPD can now be obtained from \refeq{eq:dual},

\medskip
\textbf{DNBLPD}:
\begin{align}
\text{max.} &\quad   \sum_{i \in \mathcal{I}} \hat{\phi}_{i} + \sum_{j \in \mathcal{J}} \hat{\theta}_{j} & &\nonumber  \\
\mbox{Subj. to } & & &\nonumber \\
\hat{\phi}_{i} &\leq \min_{\bs{a} \in \mathcal{A}_i}\left\langle-\hat{\bs{u}}_i, \Xi(\bs{a})\right\rangle &(&i \in \mathcal{I}), \nonumber \\
\hat{\theta}_{j} &\leq \min_{\bs{b} \in \mathcal{C}_j}\left\langle-\bs{\hat{v}}_j, \Xi(\bs{b})\right\rangle &(&j \in \mathcal{J}), \nonumber \\
\hat{\bs{u}}_{i,j} &= - \hat{\bs{v}}_{j,i} &(&(i,j) \in \mathcal{E}), \nonumber \\
\hat{\bs{u}}_{i,0} &= - \hat{\bs{f}}_i &(&i \in \mathcal{I}), \nonumber \\
\hat{\bs{f}}_i &= \bs{\lambda}_i &(&i \in \mathcal{I}). \nonumber 
\end{align}
\subsection{Softened Dual Linear Program}
We make use of the soft-minimum operator introduced in \cite{VoKo_06} and derive the ``Softened Dual Linear Program''. For any $\kappa \in \mathbb{R}$, $\kappa > 0$, the soft-minimum operator is defined as 
\begin{equation*}
\min_{l}{}^{(\kappa)} \{ z_l \} \triangleq -\frac{1}{\kappa} \log \left( \sum_{l} e^{-\kappa z_l}\right) = -\sym^{-1}\left(\sum_{l}\sym\Big({-z_l}\Big)\right)
\end{equation*}
where $\min_{l}{}^{(\kappa)} \{ z_l \} \le \min_{l} \{z_{l}\}$ with equality attained in the limit as $\kappa \to \infty$.  With this we define the softened dual linear program SDNBLPD which is the same as the DNBLPD except that $\min$ is replaced by $\min {}^{(\kappa)}$.
%
%
%
%
\section{Local Function}
In SDNBLPD, $\hat{\phi}_i$ and $\hat{\theta}_j$ are involved in only one inequality and hence we can replace these inequalities with equality without changing the optimal solution (the same is true of DNBLPD). With this, let us select an edge $(i,j) \in \mathcal{E}$ and assume that the variables associated to the rest of the edges are kept constant; then the ``local function" related to edge $(i,j)$ is
\begin{eqnarray} \label{eq:local_fun}
{h} \left(\hat{\bs{u}}_{i,j}\right) = \min_{\bs{a} \in \mathcal{A}_i} {}^{(\kappa)} \left\langle -\hat{\bs{u}}_i, \Xi(\bs{a})\right\rangle + \min_{\bs{b} \in \mathcal{C}_j} {}^{(\kappa)} \left\langle-\bs{\hat{v}}_j, \Xi(\bs{b})\right\rangle \; 
\end{eqnarray}
Though the soft-minimum operator is an approximation of the minimum, its advantage can be observed from \refeq{eq:local_fun}. Here, the local function would be non-differentiable without use of the soft-minimum operator and as we will see in the next section, the convexity and differentiability of ${h} \left(\hat{\bs{u}}_{i,j}\right)$ make it easier to treat mathematically.
\section{Low Complexity LP Decoding Algorithm for Nonbinary Linear Codes}
If the current values of variables $\hat{\bs{u}}_{i,j}, \hat{\phi}_{i}, \hat{\theta}_{j}$ related to edge $(i,j) \in \mathcal{E}$ are replaced with the new values (at the same time keeping variables related to other edges constant) such that ${h} \left(\hat{\bs{u}}_{i,j}\right)$ is maximized, then we can guarantee that the dual function also increases or else remains constant at its current value. The new value $\bar{u}^{(\alpha)}_{i,j}$ for each $\hat{u}^{(\alpha)}_{i,j}, \alpha \in \Re^{-}$ which maximizes ${h} \left(\hat{\bs{u}}_{i,j}\right)$ is given by
\begin{align} \label{eq:new_u_alpha}
\bar{u}^{(\alpha)}_{i,j} \triangleq \argmax_{\hat{u}^{(\alpha)}_{i,j}} \;{h} \left(\hat{\bs{u}}_{i,j}\right) \quad \left(\forall \alpha \in \Re^{-}\right)
\end{align}
Once we have calculated $\bar{\bs{u}}_{i,j}$, we can update the variables $\hat{\phi}_i$ and $\hat{\theta}_j$ accordingly. The calculation of $\bar{\bs{u}}_{i,j}$ is given in the following lemma.
\begin{lemma} \label{le:lemma_1}
The value of $\bar{u}_{i,j}^{(\alpha)}$ of \refeq{eq:new_u_alpha} can be calculated by
\begin{equation*}
\quad \bar{u}_{i,j}^{(\alpha)} = \frac{1}{2} \left( (V_{i,\bar{\alpha}} - V_{i, \alpha}) - (C_{j,\bar{\alpha}} - C_{j,\alpha}) \right)
\end{equation*}
where,
\begin{align*}
&V_{i,\bar{\alpha}} \triangleq - \min_{\underset{a_j \ne \alpha}{\bs{a} \in \mathcal{A}_i}} {}^{(\kappa)} \left\langle -\hat{\bs{u}}_i, \Xi({\bs{a}}) \right\rangle, \\
&V_{i,\alpha} \triangleq - \min_{\underset{a_j = \alpha}{\bs{a} \in \mathcal{A}_i}} {}^{(\kappa)} \left\langle -\tilde{\bs{u}}_i, \Xi(\tilde{\bs{a}}) \right\rangle, \\
&C_{j,\bar{\alpha}} \triangleq - \min_{\underset{b_i \ne \alpha}{\bs{b} \in \mathcal{C}_j}} {}^{(\kappa)} \left\langle -\hat{\bs{v}}_j, \Xi(\bs{b}) \right\rangle, \\
&C_{j,\alpha} \triangleq - \min_{\underset{b_i = \alpha}{\bs{b} \in \mathcal{C}_j}} {}^{(\kappa)} \langle -\tilde{\bs{v}}_j, \Xi(\tilde{\bs{b}}) \rangle. 
\end{align*}
Here the vectors $\tilde{\bs{u}}_{i}$ and $\tilde{\bs{a}}$ are the vectors $\hat{\bs{u}}_i$ and ${\bs{a}}$ respectively where the $j$-th position is excluded. Similarly, vectors $\tilde{\bs{v}}_{j}$ and $\tilde{\bs{b}}$ are obtained by excluding the $i$-th position from $\hat{\bs{v}}_j$ and ${\bs{b}}$ respectively.\\ \end{lemma}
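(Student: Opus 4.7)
My plan is to reduce the optimization in \refeq{eq:new_u_alpha} to an elementary single-variable calculus problem by first isolating the dependence of $h(\hat{\bs{u}}_{i,j})$ on the scalar $x \triangleq \hat{u}_{i,j}^{(\alpha)}$, holding all other entries of $\hat{\bs{u}}_{i,j}$ fixed.

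The first step is to split each of the two soft-minima in \refeq{eq:local_fun} into the branches ``$a_j=\alpha$'' versus ``$a_j\ne\alpha$'' (resp.\ ``$b_i=\alpha$'' versus ``$b_i\ne\alpha$''), using the identity $\min_{l}{}^{(\kappa)}\{z_l\} = \min{}^{(\kappa)}\{\min_{l \in L_1}{}^{(\kappa)}\{z_l\},\; \min_{l \in L_2}{}^{(\kappa)}\{z_l\}\}$ valid for any partition $L_1 \cup L_2$ of the index set (this follows immediately from the log-sum-exp definition of $\min{}^{(\kappa)}$). Because $\mathcal{A}_i$ is a repetition code, the branch $a_j=\alpha$ consists of the single all-$\alpha$ codeword, and the substitution $\langle -\hat{\bs{u}}_i,\Xi(\bs{a})\rangle = \langle -\tilde{\bs{u}}_i,\Xi(\tilde{\bs{a}})\rangle - x$ shows it equals $-V_{i,\alpha}-x$. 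For the complementary branch $a_j\ne\alpha$, the one-hot structure of $\xi$ forces $\xi(a_j)$ to have no mass in the $\alpha$-coordinate, so $x$ does not appear and this branch equals $-V_{i,\bar\alpha}$. An analogous decomposition of the $\mathcal{C}_j$-soft-min, together with the dual-edge constraint $\hat{\bs{v}}_{j,i}=-\hat{\bs{u}}_{i,j}$, produces the branches $-C_{j,\alpha}+x$ and $-C_{j,\bar\alpha}$.

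Substituting back gives
\[
-\kappa\, h = \log\Bigl[\bigl(e^{\kappa(V_{i,\alpha}+x)}+e^{\kappa V_{i,\bar\alpha}}\bigr)\bigl(e^{\kappa(C_{j,\alpha}-x)}+e^{\kappa C_{j,\bar\alpha}}\bigr)\Bigr].
\]
Expanding the product yields four exponentials; two are independent of $x$, and the remaining two may be written $A e^{\kappa x}$ and $B e^{-\kappa x}$ with $A = e^{\kappa(V_{i,\alpha}+C_{j,\bar\alpha})}$ and $B = e^{\kappa(V_{i,\bar\alpha}+C_{j,\alpha})}$. Since $\log$ is monotone increasing, maximizing $h$ reduces to minimizing the strictly convex function $A e^{\kappa x} + B e^{-\kappa x}$ over $x \in \mathbb{R}$, whose unique stationary point $x^\star = \tfrac{1}{2\kappa}\log(B/A)$ is the global minimum. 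Simplifying, $x^\star = \tfrac{1}{2}\bigl[(V_{i,\bar\alpha}-V_{i,\alpha})-(C_{j,\bar\alpha}-C_{j,\alpha})\bigr]$, exactly the formula asserted for $\bar{u}_{i,j}^{(\alpha)}$.

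The only real obstacle is careful bookkeeping: one must verify that $V_{i,\bar\alpha}$ and $C_{j,\bar\alpha}$ are genuinely independent of $x$, which relies on both the one-hot encoding $\xi$ (so that $\xi(a_j)$ has a zero in the $\alpha$-coordinate whenever $a_j \neq \alpha$) and the repetition structure of $\mathcal{A}_i$ (which makes the $a_j=\alpha$ branch a single codeword so the tilde-version appears without a surrounding $\min{}^{(\kappa)}$). Once this decomposition is carried out cleanly, the remaining optimization is a one-line calculus exercise.
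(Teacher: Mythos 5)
Your proposal is correct and follows essentially the same route as the paper's proof: both split each soft-minimum into the $a_j=\alpha$ / $a_j\ne\alpha$ (resp.\ $b_i=\alpha$ / $b_i\ne\alpha$) branches, use the one-hot encoding and the edge constraint $\hat{\bs{v}}_{j,i}=-\hat{\bs{u}}_{i,j}$ to isolate the $\pm\hat{u}_{i,j}^{(\alpha)}$ dependence, and then locate the unique stationary point. Your final step (reducing to minimizing $Ae^{\kappa x}+Be^{-\kappa x}$ under the monotone $\log$) is a slightly tidier packaging of the paper's direct differentiation of the two $\sym^{-1}$ terms, and has the small added benefit of making the convexity, and hence the global optimality of the stationary point, explicit.
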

\begin{proof}
\begin{align*}
{h} \left(\hat{\bs{u}}_{i,j}\right) &= \min_{\bs{a} \in \mathcal{A}_i} {}^{(\kappa)} \left\langle -\hat{\bs{u}}_{i},\Xi(\bs{a})\right\rangle  + \min_{\bs{b} \in \mathcal{C}_j} {}^{(\kappa)} \left\langle-\bs{\hat{v}}_j,\Xi(\bs{b})\right\rangle\\
= -&\sym^{-1} \left( \sum_{\bs{a} \in \mathcal{A}_i} \sym \Big({\left\langle \hat{\bs{u}}_{i},\; \Xi(\bs{a}) \right\rangle} \Big) \right) \\
&- \sym^{-1} \left( \sum_{\bs{b} \in \mathcal{C}_j} \sym \Big({\left\langle \hat{\bs{v}}_{j}, \;\Xi(\bs{b}) \right\rangle}\Big) \right) \\
%
%
%
%
%
%
=-&\sym^{-1}\left( \sum_{\underset{a_j \ne \alpha} {\bs{a} \in \mathcal{A}_i}} \sym \Big({\left\langle \hat{\bs{u}}_{i},\; \Xi(\hat{\bs{a}}) \right\rangle}\Big)  
+\sum_{\underset{a_j = \alpha} {\bs{a} \in \mathcal{A}_i}} \sym \Big(\langle \hat{\bs{u}}_{i},\; \Xi(\hat{\bs{a}}) \rangle\Big) \right)\\
-&\sym^{-1} \left( \sum_{\underset{b_i \ne \alpha}{\bs{b} \in \mathcal{C}_j}} \sym \Big(\langle \hat{\bs{v}}_{j},\; \Xi(\hat{\bs{b}}) \rangle\Big)  
+ \sum_{\underset{b_i = \alpha}{\bs{b} \in \mathcal{C}_j}} \sym \Big(\langle \hat{\bs{v}}_{j},\; \Xi(\hat{\bs{b}}) \rangle\Big) \right) \\
%
%
=  -&\sym^{-1}\left( \sum_{\underset{a_j \ne \alpha} {\bs{a} \in \mathcal{A}_i}} \sym \Big({\left\langle \hat{\bs{u}}_{i},\; \Xi(\hat{\bs{a}}) \right\rangle}\Big)  \right. \\
&\quad\quad +\left. \sum_{\underset{a_j = \alpha} {\bs{a} \in \mathcal{A}_i}} \sym \Big({\hat{u}_{i,j}^{(\alpha)} + \langle \tilde{\bs{u}}_{i},\; \Xi(\tilde{\bs{a}}) \rangle}\Big) \right)\\
-&\sym^{-1} \left( \sum_{\underset{b_i \ne \alpha}{\bs{b} \in \mathcal{C}_j}} \sym \Big({\langle \hat{\bs{v}}_{j},\; \Xi(\hat{\bs{b}}) \rangle}\Big)  \right. \\
&\left.\quad\quad+ \sum_{\underset{b_i = \alpha}{\bs{b} \in \mathcal{C}_j}} \sym \Big({-\hat{u}_{i,j}^{(\alpha)} + \langle \tilde{\bs{v}}_{j},\; \Xi(\tilde{\bs{b}}) \rangle}\Big) \right) \\
= -&\sym^{-1} \bigg( \sym \Big({V_{i,\bar{\alpha}}}\Big)  + \sym \Big({\hat{u}_{i,j}^{(\alpha)}}\Big) \sym \Big({V_{i,\alpha}}\Big) \bigg) \\
&- \sym^{-1} \bigg( \sym \Big({C_{j,\bar{\alpha}}}\Big) + \sym \Big({- \hat{u}_{i,j}^{(\alpha)}}\Big) \sym \Big({C_{j,\alpha}}\Big) \bigg)
\end{align*}
Now to maximize ${h} \left(\hat{\bs{u}}_{i,j}\right)$, we set
\begin{align*}
\frac{\partial h\left( \hat{\bs{u}}_{i,j} \right)}{\partial \hat{u}_{i,j}^{(\alpha)}} =
%
%
-\frac{1}{\kappa} &\frac{+\kappa \; \sym \Big({\hat{u}_{i,j}^{(\alpha)}}\Big) \sym \Big({V_{i,\alpha}}\Big)} {\sym \Big({V_{i,\bar{\alpha}}}\Big)  + \sym \Big({\hat{u}_{i,j}^{(\alpha)}}\Big) \; \sym \Big({V_{i,\alpha}}\Big)} &\\
-\frac{1}{\kappa} \;\; &\frac{-\kappa \; \sym \Big({-{u}_{i,j}^{(\alpha)}}\Big) \sym \Big({C_{j,\alpha}} \Big)} {\sym \Big({C_{j,\bar{\alpha}}}\Big) + \sym \Big({-\hat{u}_{i,j}^{(\alpha)}}\Big) \; \sym \Big({C_{j,\alpha}}\Big)} = 0&
\end{align*}
\begin{align*}
%
&\Rightarrow \sym \Big({\hat{u}_{i,j}^{(\alpha)}}\Big) \; \sym \Big({V_{i,\alpha} + C_{j,\bar{\alpha}} }\Big) = \sym \Big({-\hat{u}_{i,j}^{(\alpha)} }\Big) \; \sym \Big(V_{i,\bar{\alpha}} + C_{j,\alpha}\Big) \\
&\Rightarrow \hat{u}_{i,j}^{(\alpha)} + (V_{i,\alpha} +  C_{j,\bar{\alpha}}) = -\hat{u}_{i,j}^{(\alpha)} + (V_{i,\bar{\alpha}} + C_{j,\alpha}) \; . \\
&\text{This yields,} \\
&\quad\quad\quad\bar{u}_{i,j}^{(\alpha)} = \frac{1}{2} ((V_{i,\bar{\alpha}} - V_{i, \alpha}) - (C_{j,\bar{\alpha}} - C_{j,\alpha})) \; .
\end{align*}
\end{proof}
\reflem{le:lemma_1} is a generalization of Lemma 3 of \cite{VoKo_06} to the case of nonbinary codes. One visible difference between the binary case and the present generalization is in the calculation of $V_{i,\bar{\alpha}}$ and $C_{j,\bar{\alpha}}$. Here in the case of nonbinary codes, the calculation of $V_{i,\bar{\alpha}}$ does not exclude the $j$-th entry from $\bs{a} \in \mathcal{A}_i$ and $\hat{\bs{u}}_{i,j}$; similarly the calculation of $C_{j,\bar{\alpha}}$ does not exclude the $i$-th entry from $\bs{b} \in \mathcal{C}_j$ and $\hat{\bs{v}}_{j,i}$. Note that this is not inconsistent since $\hat{u}^{({\alpha})}_{i,j}$ is never used to update itself. 
Here the calculation of $V_{i,\bar{\alpha}}$ and $C_{j,\bar{\alpha}}$ requires $\bar{\alpha} \in \Re \setminus \{0, \alpha\}$ and hence $\xi(\bar{\alpha})$ is always multiplied with the corresponding $\hat{u}^{(\bar{\alpha})}_{i,j}$. This ensures that $\hat{u}^{({\alpha})}_{i,j}$ is not used for calculating $\bar{u}^{(\alpha)}_{i,j}$.

As mentioned in \cite{VoKo_06}, the update equation given in Lemma 3 of \cite{VoKo_06}
can be efficiently computed with the help of the variable and check node calculations of the (binary) SP algorithm. Due to this, the complexity of computing $(C_{j,\bar{\alpha}} - C_{j,{\alpha}})$ is $O(d)$ for binary codes.
On the other hand, in case of nonbinary codes the mapping $\Xi$ used in NBLPD transforms the nonbinary linear codes $\mathcal{A}_i$ (repetition code) and $\mathcal{C}_j$ (SPC code) into nonlinear binary codes $\mathcal{A}^{NL}_i = \{\Xi(\bs{a}) : \forall \bs{a} \in \mathcal{A}_i\}$ and $\mathcal{C}^{NL}_j = \{\Xi(\bs{b}) : \forall \bs{b} \in \mathcal{C}_j\}$ respectively.
Here, the computation of $(V_{i,\bar{\alpha}} - V_{i,{\alpha}})$ and $(C_{j,\bar{\alpha}} - C_{j,{\alpha}})$ is related to the SP decoding of nonlinear binary codes 
$\mathcal{A}^{NL}_i$ and $\mathcal{C}^{NL}_j$. $\mathcal{A}_i$ and $\mathcal{C}_j$ are duals of each other, however such relationship 
between $\mathcal{A}^{NL}_i$ and $\mathcal{C}^{NL}_j$ 
is not so simple.
Hence the computation of $(C_{j,\bar{\alpha}} - C_{j,{\alpha}})$ in the dual domain requires further investigation.
%
%

One option to compute $(C_{j,\bar{\alpha}} - C_{j,{\alpha}})$ 
is by going through all possible codewords of the SPC code $\mathcal{C}_j$ exhaustively.
In this case the complexity of computing $(C_{j,\bar{\alpha}} - C_{j,{\alpha}})$ is $O(q^{(d-1)})$. 
However it is also possible to rewrite the equations for $C_{j,\bar{\alpha}} \; \text{ and } \; C_{j,\alpha}$ as follows, 
\begin{align*}
\sym \Big(C_{j,\bar{\alpha}}\Big) = \sum_{\underset{b_i \ne \alpha}{\bs{b} \in \mathcal{C}_j}} \sym \Big(\left\langle \hat{\bs{v}}_j, \Xi(\bs{b}) \right\rangle \Big)
\end{align*}
Similarly
\begin{align*}
& \sym \Big(C_{j,\alpha}\Big) = \sum_{\underset{b_i = \alpha}{\bs{b} \in \mathcal{C}_j}} \sym \Big(\langle \tilde{\bs{v}}_j, \Xi(\tilde{\bs{b}}) \rangle\Big)
\end{align*}
It can be observed from the above equations that the calculation of the $C_{j,\bar{\alpha}} \; \text{and} \; C_{j,\alpha}$ is in the form of the marginalization of a product of functions.
Hence it is possible to compute $C_{j,\bar{\alpha}} \text{ and } C_{j,\alpha}$ with the help of a trellis based variant of the SP algorithm. 
The complexity of computing $(C_{j,\bar{\alpha}} - C_{j,{\alpha}})$ with the help of the trellis of the SPC code $\mathcal{C}_j$ is linear in the maximum check-node degree $d$. However, this trellis based approach is still under investigation and is not used for the simulation result given in the \refsec{sec_result}.

We can now formulate the decoding algorithm with the help of the update equation given in \reflem{le:lemma_1}. We select an edge $(i,j) \in \mathcal{E}$ and calculate $\bar{\bs{u}}_{i,j}$ from \reflem{le:lemma_1}. Then $\hat{\phi}_i, \; \hat{\theta}_j$ and the objective function are updated accordingly. One \textit{iteration} is completed when all edges $(i,j) \in \mathcal{E}$ are updated cyclically. This is a coordinate-ascent type algorithm and its convergence may be proved in the same manner as in Lemma 4 of \cite{VoKo_06}.
\begin{lemma} \label{le:convergence}
We assume $d \geq 3$ for a given parity-check matrix $\mathcal{H}$ of the code $\mathcal{C}$. If we update all edges $(i,j) \in \mathcal{E}$ cyclically with the update equation given in \reflem{le:lemma_1}, then the objective function of SDNBLPD converges to its maximum.
\end{lemma}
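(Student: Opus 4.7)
The plan is to follow the three-step strategy of Lemma~4 of \cite{VoKo_06}. First I would establish that the cyclic coordinate-ascent updates produce a nondecreasing sequence of values of the SDNBLPD objective; second, that this sequence is bounded above and hence converges; third, that the limit coincides with the optimum of SDNBLPD.

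For monotonicity, the key observation is that, after eliminating the equality-constrained variables $\hat{\bs{f}}_{i}$, $\hat{\bs{v}}_{j,i}$, $\hat{\bs{u}}_{i,0}$ via the constraints of DNBLPD, and after replacing $\hat{\phi}_{i}, \hat{\theta}_{j}$ by their soft-minimum expressions (which are the optimal values of the corresponding inequality constraints once the soft-minimum is used, as noted at the beginning of Section~V), the SDNBLPD objective becomes an unconstrained function of the free variables $\{\hat{\bs{u}}_{i,j}\}_{(i,j)\in\mathcal{E}}$. In this form, the variables associated with a given edge $(i,j)$ appear only through the local function ${h}(\hat{\bs{u}}_{i,j})$ of \refeq{eq:local_fun}, so replacing $\hat{\bs{u}}_{i,j}$ by the maximizer given in \reflem{le:lemma_1} can only increase (or leave unchanged) the global objective. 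Boundedness from above follows from weak LP duality applied to the primal-dual pair softened by $\sym$, since the softened primal has a finite optimum; a monotone bounded sequence converges, and so does the sequence of objective values.

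To identify the limit with the global maximum I would exploit the fact that the soft-min operator makes the reduced objective continuously differentiable, and that it is concave since it is a sum of soft-minima of linear functions of the free variables (each soft-min $-\sym^{-1}(\sum_l \sym(-z_l))$ is concave in $\{z_l\}$). Let $\hat{\bs{u}}^{\star}$ be any accumulation point of the iterates. Along the cyclic schedule every edge $(i,j)$ is visited once per iteration, and by the convergence of the objective no edge update can produce a strict increase in the limit; using the continuity of the update map of \reflem{le:lemma_1} and the smoothness of $h$, this forces $\partial h / \partial \hat{u}_{i,j}^{(\alpha)}=0$ at $\hat{\bs{u}}^{\star}$ for every edge $(i,j)\in\mathcal{E}$ and every $\alpha\in\Re^{-}$. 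Since a continuously differentiable concave function is globally maximized wherever all partial derivatives vanish, $\hat{\bs{u}}^{\star}$ attains the optimum of SDNBLPD. The assumption $d\geq 3$ is invoked exactly as in \cite{VoKo_06}: it ensures that each local function ${h}(\hat{\bs{u}}_{i,j})$ is strictly concave in its argument, so that the maximizers in \reflem{le:lemma_1} are unique and coordinate-wise stationarity is equivalent to full stationarity.

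The main obstacle is this third step: rigorously translating the "no-further-improvement" property at accumulation points of cyclic coordinate ascent into simultaneous first-order conditions for all edges, and then into global optimality. The argument requires a careful continuity/compactness treatment of the iterates together with strict concavity of each local function, but no genuinely new idea beyond \cite{VoKo_06}; only the nonbinary parameterization of the variables $\hat{\bs{u}}_{i,j} \in \mathbb{R}^{q-1}$ and the replacement of the binary SPC/repetition structure by $\mathcal{C}_j, \mathcal{A}_i$ change. Hence the proof is a direct generalization of \cite{VoKo_06}.
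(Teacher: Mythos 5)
Your proposal takes essentially the same route as the paper, whose entire proof is the statement that the argument is the same as that of Lemma~4 of \cite{VoKo_06}; your three-step coordinate-ascent argument (monotonicity of the cyclic updates, boundedness via weak duality, identification of the limit using the differentiability and concavity of the softened objective, with $d \geq 3$ securing uniqueness of the per-edge maximizer) is a faithful expansion of that citation. The only loose ends are the compactness/continuity details needed to pass to accumulation points, which you already flag and which are inherited from \cite{VoKo_06} rather than introduced by your argument.
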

\begin{proof}
The proof is essentially the same as that of Lemma 4 of \cite{VoKo_06}.
\end{proof}
\medskip
The algorithm terminates after a fixed number of iterations or when it finds a codeword. Knowing the solution of SDNBLPD does not give an estimate of the codeword directly. However, an estimate of the $i$-th symbol $\hat{\bs{c}}_i$ can be obtained from the vector $\hat{\bs{u}}_{i,j}$. For this we define, 
\begin{equation*}
\hat{{x}}^{(\alpha)}_i = \sum_{j \in \mathcal{J}_i} -\hat{u}_{i,j}^{(\alpha)} 
\end{equation*}
It is possible that the value of $\hat{{x}}^{(\alpha)}_i$ is zero. In this case, the corresponding symbol is erased. Otherwise the symbol estimate is obtained as follows:
\begin{eqnarray}
\hat{\bs{c}}_i &=& \xi^{-1}(\hat{\bs{f}}_i) \label{eq:decision}
\end{eqnarray}
where
\begin{eqnarray*}
\hat{{f}}^{(\alpha)}_i &=& \left \{ 
\begin{array}{l l}
 \text{1}, & \text{ if } \hat{{x}}^{(\alpha)}_i < \text{0}, \nonumber \\
 \text{0}, & \text{ if } \hat{{x}}^{(\alpha)}_i > \text{0} \; .  \\
 \end{array} \right . 
\end{eqnarray*}
If more than one $\hat{{f}}^{(\alpha)}_i$ is assigned the value $1$, then the inverse function $\xi^{-1}$ cannot be invoked to give the estimate of $\bs{c}_i$. However, the constraints of NBLPD also enforce $\sum_{\alpha \in \Re^{-}} {{f}}^{(\alpha)}_i = 1$, and hence we will never get such a configuration of $\hat{{f}}^{(\alpha)}_i$.

The advantage of using the soft-minimum operator is evident from \reflem{le:lemma_1}. However, for practical implementation we are interested in $\kappa \to \infty$. As mentioned earlier, in limit of $\kappa \to \infty$, the soft-minimum operator is same as the minimum which requires less computation. The following lemma considers $\kappa \to \infty$.
%
\begin{lemma} \label{le:lemma_k_infty}
In the limit of $\kappa \to \infty$, the function $h(\hat{\bs{u}}_{i,j})$ is maximized by any value $\hat{{u}}_{i,j}^{(\alpha)}$ that lies in the closed interval between
\begin{eqnarray*}
(V_{i,\bar{\alpha}} -V_{i,{\alpha}})  &\text{ and }& -(C_{j,\bar{\alpha}} -C_{j,{\alpha}})
\end{eqnarray*}
where
\begin{eqnarray*}
V_{i,\bar{\alpha}} \triangleq - \min_{\underset{a_j \ne \alpha}{\bs{a} \in \mathcal{A}_i}} \left\langle -\hat{\bs{u}}_i, \Xi({\bs{a}}) \right\rangle && C_{j,\bar{\alpha}} \triangleq - \min_{\underset{b_i \ne \alpha}{\bs{b} \in \mathcal{C}_j}} \left\langle -\hat{\bs{v}}_j, \Xi(\bs{b}) \right\rangle, \\
V_{i,\alpha} \triangleq - \min_{\underset{a_j = \alpha}{\bs{a} \in \mathcal{A}_i}} \left\langle -\tilde{\bs{u}}_i, \Xi(\tilde{\bs{a}}) \right\rangle && C_{j,\alpha} \triangleq - \min_{\underset{b_i = \alpha}{\bs{b} \in \mathcal{C}_j}} \langle -\tilde{\bs{v}}_j, \Xi(\tilde{\bs{b}}) \rangle.
\end{eqnarray*}
\end{lemma}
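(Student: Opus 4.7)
The plan is to pass to the limit $\kappa \to \infty$ directly in $h(\hat{\bs{u}}_{i,j})$ from \refeq{eq:local_fun} and characterize the set of maximizers of the resulting function, viewed as a function of the scalar $x \triangleq \hat{u}_{i,j}^{(\alpha)}$ with all other coordinates held fixed. Since the soft-minimum converges pointwise to the ordinary minimum as $\kappa \to \infty$, the quantities $V_{i,\bar\alpha}, V_{i,\alpha}, C_{j,\bar\alpha}, C_{j,\alpha}$ in the present statement are precisely the $\kappa \to \infty$ limits of their namesakes in \reflem{le:lemma_1}, and $h$ becomes
\begin{equation*}
h_\infty(\hat{\bs{u}}_{i,j}) = \min_{\bs{a}\in\mathcal{A}_i}\langle -\hat{\bs{u}}_i,\Xi(\bs{a})\rangle + \min_{\bs{b}\in\mathcal{C}_j}\langle -\hat{\bs{v}}_j,\Xi(\bs{b})\rangle \; .
\end{equation*}

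Next I would isolate the dependence on $x$ by partitioning $\mathcal{A}_i$ according to whether $a_j = \alpha$, and $\mathcal{C}_j$ according to whether $b_i = \alpha$. For an $\bs{a}$ with $a_j \ne \alpha$ the vector $\xi(a_j)$ is orthogonal to $\xi(\alpha)$, so the inner product does not involve $x$; for $a_j = \alpha$ it picks up an additive term $-x$. The same partition applied to $\mathcal{C}_j$, combined with the DNBLPD constraint $\hat{\bs{v}}_{j,i} = -\hat{\bs{u}}_{i,j}$, yields an additive $+x$ (with the opposite sign) for the $b_i = \alpha$ configurations. A short calculation then produces
\begin{align*}
\min_{\bs{a}\in\mathcal{A}_i}\langle -\hat{\bs{u}}_i,\Xi(\bs{a})\rangle &= \min\bigl(-V_{i,\bar\alpha},\;-x - V_{i,\alpha}\bigr), \\
\min_{\bs{b}\in\mathcal{C}_j}\langle -\hat{\bs{v}}_j,\Xi(\bs{b})\rangle &= \min\bigl(-C_{j,\bar\alpha},\;\phantom{+}x - C_{j,\alpha}\bigr),
\end{align*}
so $h_\infty$ is a sum of two univariate concave piecewise-linear functions of $x$.

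Setting $p \triangleq V_{i,\bar\alpha} - V_{i,\alpha}$ and $q \triangleq -(C_{j,\bar\alpha} - C_{j,\alpha})$, the first summand has slope $0$ on $(-\infty, p]$ and slope $-1$ on $[p, \infty)$, while the second has slope $+1$ on $(-\infty, q]$ and slope $0$ on $[q, \infty)$. A short case split on whether $p \le q$ or $p > q$ shows that $h_\infty$ has slope $+1$ for $x < \min(p, q)$, slope $0$ throughout $[\min(p, q), \max(p, q)]$, and slope $-1$ for $x > \max(p, q)$, so by concavity its argmax is exactly the closed interval between $p$ and $q$, which is the claim. The argument is essentially elementary; the one place that requires care is the sign flip $\hat{v}_{j,i}^{(\alpha)} = -\hat{u}_{i,j}^{(\alpha)}$, which is precisely what makes the two piecewise-linear functions oppositely sloped and thus yields a flat plateau of maximizers rather than a unique one.
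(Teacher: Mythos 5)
Your proposal is correct, and it is worth noting that it is substantially more complete than what the paper provides: the paper's entire proof of this lemma is the single sentence ``The proof of the lemma is same as that of Lemma 5 of [VoKo\_06],'' i.e.\ a deferral to the binary-case reference with no argument given for the nonbinary setting. Your route --- pass to the pointwise limit of the soft-minimum, split $\mathcal{A}_i$ on $a_j=\alpha$ versus $a_j\ne\alpha$ and $\mathcal{C}_j$ on $b_i=\alpha$ versus $b_i\ne\alpha$, use $\hat{\bs{v}}_{j,i}=-\hat{\bs{u}}_{i,j}$ to get the opposite sign on $x$, and then read off the argmax of the sum of two oppositely-sloped concave piecewise-linear functions --- is the natural one and checks out: the breakpoints land exactly at $p=V_{i,\bar\alpha}-V_{i,\alpha}$ and $q=-(C_{j,\bar\alpha}-C_{j,\alpha})$, the slope pattern $+1,\,0,\,-1$ holds in both orderings of $p$ and $q$, and concavity gives that the maximizer set is precisely the closed interval between them. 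Your observation that the orthogonality $\langle\xi(a_j),\xi(\alpha)\rangle=0$ for $a_j\ne\alpha$ (including $a_j=0$, where $\xi(0)=\bs{0}$) is what isolates the dependence on the single coordinate $x=\hat{u}_{i,j}^{(\alpha)}$ is exactly the point the paper glosses over when asserting that the binary argument carries through; making it explicit also confirms the paper's remark that $\hat{u}_{i,j}^{(\alpha)}$ is never used to update itself. The only stylistic caveat is that you maximize the limit function $h_\infty$ rather than tracking the limit of the finite-$\kappa$ maximizers; this matches the intended reading of the lemma (and the unique finite-$\kappa$ maximizer from the preceding lemma converges to the midpoint of your interval, so the two viewpoints are consistent), but a one-line remark to that effect would make the statement ``in the limit of $\kappa\to\infty$'' unambiguous.
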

\begin{proof}
The proof of the lemma is same as that of Lemma 5 of \cite{VoKo_06}.
\end{proof}
\medskip
Now we can update edges $(i,j) \in \mathcal{E}$ cyclically where the $\bar{\bs{u}}_{i,j}$ is calculated according to \reflem{le:lemma_k_infty}. However, in this case, we cannot guarantee convergence of the algorithm. This is because for $\kappa \to \infty$ the objective function is not everywhere differentiable and it is not possible to use the same argument as in \reflem{le:convergence}. This problem is also discussed in Conjecture 6 of \cite{VoKo_06}. 
After the algorithm terminates, \refeq{eq:decision} can be used to get the estimate for each symbol.
\begin{figure}
\centering
\includegraphics[clip=true,width=1.0\columnwidth, keepaspectratio]{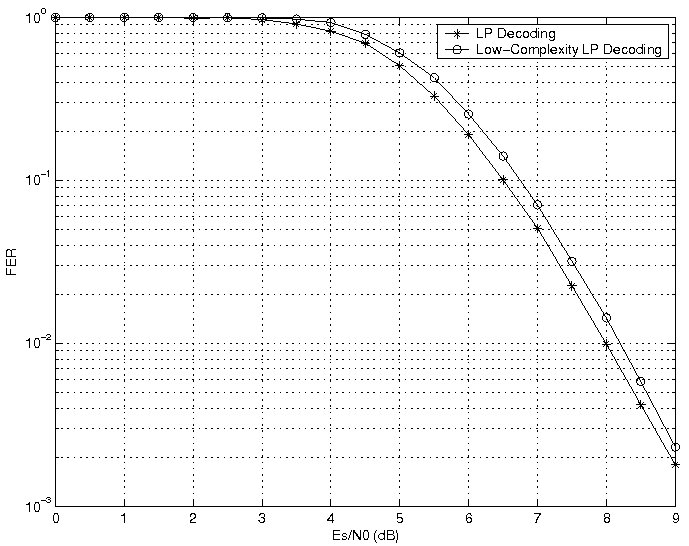}
\caption{Frame Error Rate for the example $[80, 48]$ quaternary LDPC Code under QPSK modulation. The performance of the low-complexity LP decoding algorithm is compared with that of solving NBLPD using the simplex algorithm.} 
\label{fig:FER}
\end{figure}
\section{Results} \label{sec_result}
In this section we present simulation results for the proposed algorithm. The update equation of \reflem{le:lemma_k_infty} is used for simulations. Calculation of the $(C_{j,\bar{\alpha}} -C_{j,{\alpha}})$ is carried out with exhaustive search over all codewords of SPC code $\mathcal{C}_j$. We use the LDPC code of length $n = 80$ over $\mathbb{Z}_4$. This code has rate $R(\mathcal{C}) = 0.6$ and constant check-node degree of $5$. Its parity check matrix can be constructed as follows:
\begin{eqnarray*}
\mathcal{H}_{j,i} = \left \{
\begin{array}{l l}
\text{1}, & \text{ if } i-j = \{\text{0, 41, 48}\} \\
\text{3}, & \text{ if } i-j = \{\text{8, 25}\} \\
\text{0}, & \text{ otherwise.}
\end{array} \right .
\end{eqnarray*}
We assume transmission over the AWGN channel where the nonbinary symbols are directly mapped to quaternary phase-shift keying (QPSK) signals. The same LDPC code was also used in \cite{FlSk_09}.

\reffig{fig:FER} shows the error correcting performance curve for above mentioned LDPC code. The curve marked ``LP Decoding'' uses the LP decoding algorithm of \cite{FlSk_09} with the simplex LP solver. All results are obtained by simulating up to $500$ frame errors per simulation point. The error correcting performance of low-complexity LP decoding algorithm is within $0.2$ dB of the LP decoder. It is important to note that the worst case time complexity of the simplex method has been shown to be exponential in the number of variables (i.e. block length). 
In contrast, the complexity of the low-complexity LP decoding is linear in the block length.

\section{Conclusion and Future Work}
In this paper we introduced low-complexity LP decoding algorithm for nonbinary linear codes. Building on the work of Flanagan \textit{et al.} \cite{FlSk_09} and Vontobel \textit{et al.} \cite{VoKo_06}, we derived the update equations of \reflem{le:lemma_1} \& \reflem{le:lemma_k_infty}. The complexity of the proposed algorithm is linear in the block length and hence it can also be used for moderate and long block length codes. However, its complexity is dominated by the maximum check node degree and the number of elements in the nonbinary alphabet. The main problem is that of the check node calculations. 
The binary repetition code is the dual of the binary SPC code and this fact is utilized in binary low-complexity LP decoding algorithm to reduce the computational complexity.
However, for the nonbinary case, the relationship between the corresponding nonlinear binary codes is not so simple. We are currently investigating 
dual domain methods for check node processing as well as 
variants of the sum-product algorithm which operate directly on the trellis of the nonbinary code, with the goal of leading towards a complexity reduction in the check node calculations.
%
\section{ACKNOWLEDGMENTS}
The authors would like to thank P. O. Vontobel for many helpful suggestions and comments. This work was supported in part by Claude Shannon Institute for Discrete Mathematics, Coding and Cryptography, UCD, Ireland.



\begin{thebibliography}{99}

\bibitem{Ga_63}
R. G. Gallager, ``Low Density Parity Check Codes," Monograph, M.I.T. Press, 1963.

\bibitem{MaNe_96}
D. J. C. MacKay and R. M. Neal, ``Near Shannon Limit Performance of Low Density Parity Check Codes," Electronics Letters, vol. 33, no. 6  pp. 457-458, July 1996.

\bibitem{DaMa_98}
M. C. Davey and D. J. C. MacKay, ``Low density parity check codes over GF(q)," IEEE Communication Letters, vol. 2, no. 6, pp. 165-167, June 1998.


\bibitem{Fo_01}
G. D. Forney, Jr., ``Codes on graphs: Normal realizations," IEEE Transactions on Information Theory, vol. 47, no. 2, pp. 520-548, 2001.

\bibitem{Vo_02}
P. O. Vontobel, ``Kalman Filters, Factor Graphs, and Electrical Networks," Post-Diploma Project at ETH Zurich, 2002.

\bibitem{VoLo_03}
P. O. Vontobel and H.-A. Loeliger, ``On Factor Graphs and Electrical Networks," Mathematical Systems Theory in Biology, Communication, Computation, and Finance, J. Rosenthal and D.S. Gilliam, eds., IMA Volumes in Math. \& Appl., Springer Verlag, pp. 469-492, 2003.

\bibitem{LO_04}
H.-A. Loeliger, ``An Introduction to Factor Graphs," IEEE Signal Processing Magazine, vol. 21, no. 1, pp. 28-41, January 2004.


\bibitem{FeWa_05}
J. Feldman, M.~J. Wainwright and D.~R. Karger, ``Using linear programming to decode binary linear codes," IEEE Transactions on Information Theory, vol. 51, no. 3, pp. 954-972, March 2005.

\bibitem{VoKo_06}
P. O. Vontobel and R. Koetter, ``Towards low-complexity linear-programming decoding," In Proc. of 4th International Conference on Turbo Codes and Related Topics, Munich, Germany, April 3-7, 2006.

\bibitem{GlJa_07}
A. Globerson and T. Jaakkola, ``Fixing max-product: Convergent Message Passing Algorithms for MAP LP-relaxations," In Advances in Neural Information Processing Systems, pp. 553-560, 2007.

\bibitem{FlSk_09}
M. F. Flanagan, V. Skachek, E. Byrne, and M. Greferath, ``Linear-Programming Decoding of Nonbinary Linear Codes," IEEE Transactions on Information Theory, vol. 55, no. 9, pp. 4134-4154, September 2009.

\bibitem{MaHa_09}
Y. Maeda and K. Haruhiko, ``Error Control Coding for Multilevel Cell Flash Memories Using Nonbinary Low-Density Parity-Check Codes," 24th IEEE International Symposium on Defect and Fault Tolerance in VLSI Systems, Chicago, IL, USA, October 7-9, 2009.

\bibitem{Bu_09}
D. Burshtein, ``Iterative Approximate Linear Programming Decoding of LDPC Codes with Linear Complexity," IEEE Transactions on Information Theory, vol. 55, no. 11, pp. 4835-4859, November 2009.

\bibitem{AnKa_10}
I. Andriyanova and K. Kasai, ``Finite-Length Scaling of Non-Binary $(c, d)$ LDPC Codes for the BEC," 2010 IEEE International Symposium on Information Theory, Dallas, TX, USA, June 13-18, 2010.

\end{thebibliography}
\end{document}